\def\abstract{
\typeout{Abstract}
 {\bf Abstract} 
} 
\newcommand{\be}{\begin{equation}}
\newcommand{\ee}{\end{equation}}
\newcommand{\hmu}{\hat{\mu}}
\newcommand{\hsigma}{\hat{\sigma}}
\newtheorem{thm}{Theorem}
\begin{document}

\title{Early Stopping Based on Repeated Significance}

\author{Eric Bax, Arundhyoti Sarkar, and Alex Shtoff\\Yahoo!}


\maketitle

\begin{abstract}
For a bucket test with a single criterion for success and a fixed number of samples or testing period, requiring a $p$-value less than a specified value of $\alpha$ for the success criterion produces statistical confidence at level $1 - \alpha$. For multiple criteria, a Bonferroni correction that partitions $\alpha$ among the criteria produces statistical confidence, at the cost of requiring lower $p$-values for each criterion. The same concept can be applied to decisions about early stopping, but that can lead to strict requirements for $p$-values. We show how to address that challenge by requiring criteria to be successful at multiple decision points.
\end{abstract}



\section{Introduction}
In an \textit{AB test}, we compare two treatments. In online user studies, we often refer to AB tests as \textit{bucket tests}, with each user assigned to a bucket -- a set of users given the same treatment. When comparing a new treatment to an established one, we refer to the set of users receiving the new treatment as the \textit{test bucket} and the set of users receiving the established treatment as the \textit{control bucket}. Buckets may also be called \textit{arms} of a study -- a common usage in medical studies.

In the simplest case, we decide before running an AB test how long to run it (in terms of time or number of observations), a single criterion for treatment success, such as the statement ``the new treatment increases revenue per page view compared to the established treatment," and a desired level of confidence, for example 99\%. Then, when the test completes, we get a $p$-value for the statement from the AB-testing system, and if it is 1\% or less, then we roll out the new treatment to all users. In this simple case, we can subtract our desired level of confidence (99\%) from 100\% to get the $p$-value required to achieve that confidence: 100\% - 99\% = 1\%, which may be expressed as $p \leq 0.01$. 

This simple case does not fully capture how analysts work with buckets. In some cases, we want to be confident (statistically speaking) that multiple statements are true, for example we may want 95\% confidence that the new treatment increases revenue while also not increasing the rate of short dwell-time clicks. We may also want to observe the bucket results periodically, and, if we can do so with confidence, stop the bucket early, either confident that we should implement the new treatment or that we should avoid it. In product optimization, early stopping speeds time to market for individual features and also allows more potential features to be tested. In medical studies, early stopping allows people in the control arm of a study to receive useful treatments faster, or it stops administering failed treatments to people in the experimental arm more quickly than running the test to completion.

However, early stopping introduces the potential for a temporal form of data dredging \cite{smith02,wasserstein02,young11}: selecting a stopping time based on indicators of statistical significance affects whether those indicators are accurate. A set of methods to address that challenge are called \textit{always-valid bounds} or \textit{continuous monitoring methods} \cite{deng16,johari21,maharaj23,grunwald24,waudbysmith24}. They give stopping conditions that offer confidence for any number of samples or at any time during an AB test. The methods in this paper give stopping conditions that offer confidence for a limited number of potential stopping points, called decision points, so they are \textit{group sequential methods} \cite{wald47,peto76,pocock77,wang87,jennison00,lewis23}, which are also called \textit{interim analysis methods} \cite{huang17,grayling18,ciolino23}. The methods in this paper use the group sequential method strategy of ``spending" a type I error budget \cite{lan83,lewis23}. To ensure generality and ease of use, the methods emphasize simplicity -- they do not rely on assumptions about the joint distributions of $p$-values among successive decision points or across different criteria, so they avoid the need to confirm those assumptions for each AB test.

In this paper, we offer some analysis to help the practitioner decide when to use the methods described here. For 20 or fewer decision points during an AB test (eg a three-week test with daily decisions after the first week), we show that the simplest methods described here tend to outperform always-valid bounds. For more decision points, we show that requiring a small time period of continued success before stopping tends to make the methods in this paper more effective than always-valid bounds. 

This note also offers some strategies to develop test plans for AB testing with multiple criteria and the potential for early stopping. The strategies are straightforward, allowing the practitioner to understand why they work, to mediate tradeoffs between tactics, and to create test plans with confidence. 

Section \ref{sec_bonferroni} reviews tools to bound probabilities of conjunctions of multiple events and how to apply those tools to testing with multiple success criteria. Section \ref{sec_early} reviews how to apply those ideas to sequential testing -- testing with the possibility of early stopping while controlling type I error (false positives). Section \ref{sec_rep} examines how requiring multiple statistically significant results before stopping a test early can ease the $p$-value requirements for those results. Section \ref{sec_cm} explores how to apply repetition requirements effectively as the number of decision points increases, even up to a potential decision point after each new observation: the realm of continuous monitoring. Section \ref{sec_type2} analyzes how to control type II error (false negatives) via $\alpha$-spending methods combined with requiring repetition for early stopping. Section \ref{sec_disc} concludes this paper with a discussion of directions for future work.

\section{Multiple Criteria, Bonferroni, and Boole} \label{sec_bonferroni}
Suppose you and your five closest friends rent some bicycles for the afternoon. If there is a 5\% chance of each bicycle breaking down during your ride, how (statistically) confident are you that the trip will be completed without incident? 

The answer depends on the relationships between the breakdown probabilities for the different bicycles. Suppose breakdowns are completely correlated -- for example, we ride together, and that 5\% is the probability that we will choose to bike through an area with so much broken glass that we are all certain to get flat tires. Then we are 95\% confident that we will all avoid breakdowns. That is the best case. 

In the worst case, the failure events are disjoint -- if one bicycle breaks, the others do not. For example, suppose the bicycle shop has 20 bicycles, 19 never break down, and one breaks down on every trip. By taking out six bicycles at random, the probability that we select the bad bicycle is 6 out of 20, which is 30\%. In this worst case, each bicycle has a 5\% chance of being the bad one, so the probability that one of our six is the bad one is six times 5\%, which is 30\%. That leaves us only 70\% confident of an incident-free outing.  (See Figure \ref{uniform}.)

\begin{figure} 
\includegraphics[width=3.5in]{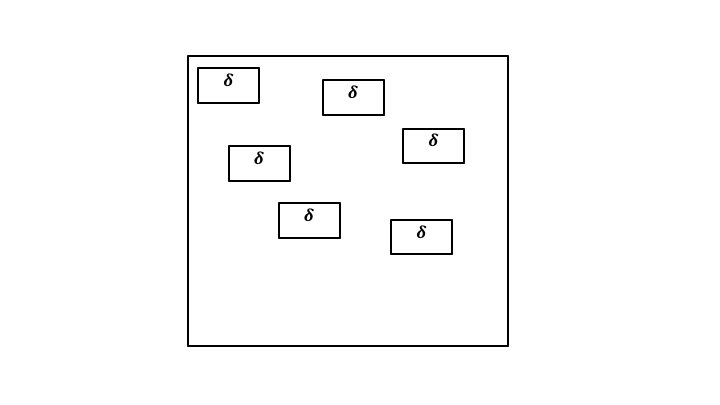} 
\caption{\textbf{Worst Case}. If six statements each have probability at most $\delta$ of being incorrect, then the probability that the combined statement: ``statement 1 and statement 2 and $\ldots$ and statement 6" is incorrect may be as high as $6 \delta$, because only one statement has to be incorrect for the ``and" of all 6 to be incorrect. The worst case is that failures are disjoint -- like how spreading six carpets so that they do not overlap covers as much area as possible. (Figure from \cite{bax16}.)} \label{uniform}
\end{figure}

Without deeper insight into the relationships between failures, we need to assume the worst case: that failure probabilities for individual criteria sum to give the failure probability for the conjunction of the criteria. So, to maintain a confidence level of $1 - \alpha$ over multiple criteria, we have to require that their $p$-values sum to $\alpha$ or less. If $m$ is the number of statements, then we can require each $p$-value to be $\frac{\alpha}{m}$ or less:
\be
p \leq \frac{\alpha}{m}.
\ee 
If that occurs, then we have our desired confidence in the conjunction of the criteria. 

As an example, suppose we want a bucket test to yield 95\% confidence that we meet both a revenue criterion and a user-experience criterion before implementing a new feature for a website. Then the allowed failure probability is $\alpha = 0.05$, since 100\% - 95\% = 5\%. So if we require the $p$-values for both statements to be 2.5\% (= 0.025) or less, then we can proceed to roll out the treatment with 95\% confidence that both the revenue and user-experience criteria will be met. 

Dividing $\alpha$ by the number of criteria $m$ to get the required $p$-value for each criterion is called a Bonferroni correction \cite{bonferroni36}. As we have shown, it is based on the idea that the sum of probabilities of events is a bound on the probability of the union of events. (If the events are disjoint, then that bound is the probability of the union.) That sum bound is known as Boole's inequality \cite{boole47,casella01}.

We can also partition $\alpha$ unequally over the criteria and still achieve $1 - \alpha$ confidence. In general, for any $\alpha_1 + \ldots + \alpha_m = 1$ with all $\alpha_i$ nonnegative, if we require 
\be
p \leq \alpha_i
\ee
for each criterion $i$, then all criteria hold, with confidence $1 - \alpha$. Equivalently, the probability of type I error, also called a false-positive result, is at most $\alpha$. To see why, consider that the probability of a type I error for criterion $i$ is at most $\alpha_i$, so we can apply the sum bound to show that the probability of a type I error for any criterion is at most the sum $\alpha_1 + \ldots + \alpha_m$. In a sense, $\alpha$ is a type I error ``budget" that we ``spend" \cite{lan83,lewis23} over the criteria to get required $p$-values.

\section{Early Stopping} \label{sec_early}
Suppose we plan to run an AB test for up to two weeks, and we will check the bucket criteria each day, with the check based on data collected since the start of the test. Based on a daily check, we may decide to stop the AB test early. Then the test has a decision point each day. (To clarify, measurements for the bucket criteria may be observed more often than daily and still we have only one decision point per day if the other observations do not contribute to stopping decisions.) The data boundaries for decisions must be determined a priori, for example measurements over the first 24 hours of data, then the first 48 hours of data, etc. Similarly, for a test that is to run until a thousand observations are gathered, the decision points can be after gathering each hundred observations. (Decision points are sometimes called \textit{interim analyses} in clinical trial literature \cite{huang17,grayling18,ciolino23,lewis23}.) In a later section, we discuss how to handle AB testing without determining the time or amount of data for the entire test a priori, and hence potentially having an unknown number of decision points. 

Here are some theorems that enable early stopping with confidence:

\begin{thm} \label{thm_dm}
Let $d$ be the number of decision points and $m$ be the number of bucket criteria. Then requiring $p$-values
\be
p \leq \frac{\alpha}{d m}
\ee
at least once for each criteria $i$ gives confidence $1 - \alpha$ that all criteria hold.
\end{thm}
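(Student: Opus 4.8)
The plan is to apply the union bound (Boole's inequality) twice, or equivalently once over a product index set. First I would fix a criterion $i$ and a decision point $j$, and consider the event $E_{ij}$ that the test reports a $p$-value at most $\frac{\alpha}{dm}$ for criterion $i$ at decision point $j$ when in fact criterion $i$ is false. By the definition of a $p$-value, $\prob{E_{ij}} \leq \frac{\alpha}{dm}$. Here the key point to state carefully is that this bound holds for each $(i,j)$ pair \emph{individually}, with no assumption on how the $p$-values at different decision points or for different criteria are jointly distributed; that independence-from-assumptions is exactly the selling point emphasized in the introduction.

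Next I would observe that a type I error for the overall procedure --- rolling out a treatment that violates some criterion --- occurs only if, for some false criterion $i$, the required $p$-value threshold was met at \emph{some} decision point $j$, i.e. only if $\bigcup_{i,j} E_{ij}$ occurs, where the union is taken over the (at most) $m$ false criteria and the $d$ decision points. Applying the sum bound from Section~\ref{sec_bonferroni} gives
\be
\prob{\bigcup_{i,j} E_{ij}} \leq \sum_{i,j} \prob{E_{ij}} \leq dm \cdot \frac{\alpha}{dm} = \alpha.
\ee
Hence the probability that all criteria hold is at least $1 - \alpha$, which is the claim. One can also phrase this as two nested applications of Theorem-style reasoning: treat each (criterion, decision point) pair as one of $dm$ ``statements'' in the multiple-criteria framework of Section~\ref{sec_bonferroni} and partition $\alpha$ equally among them.

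I do not expect a genuine obstacle here, since this is a direct Boole's-inequality argument. The one subtlety worth being precise about is the logical direction of the ``at least once'' requirement: we require the threshold to be met at least once for each criterion before declaring success, so an \emph{erroneous} success requires the threshold to be met for a false criterion, and meeting it ``at least once'' is a union (not an intersection) over decision points --- which is why the count $d$ multiplies rather than helps. I would also note in passing that counting all $m$ criteria rather than only the false ones is a harmless overestimate, so the stated bound $\frac{\alpha}{dm}$ is valid regardless of how many criteria are actually true.
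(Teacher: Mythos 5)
Your proposal is correct and follows essentially the same route as the paper: a single application of Boole's inequality over the $dm$ (criterion, decision point) pairs, bounding the probability of any incorrect significant report by $dm \cdot \frac{\alpha}{dm} = \alpha$. Your added remarks about the union over decision points and counting only false criteria are fine clarifications, but the core argument is identical to the paper's proof.
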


\begin{proof}
Apply the Bonferroni inequality. The probability that the AB testing system incorrectly reports that a criterion holds at a decision point with a $p$-value of $\frac{\alpha}{d m}$ or less is at most $\frac{\alpha}{d m}$. So the probability of any incorrect assertion with a $p$-value of $\frac{\alpha}{d m}$ or less, about any criterion at any decision point, is at most the sum of those probabilities over the number of criteria times the number of decision points: 
\be
d m \frac{\alpha}{d m} = \alpha.
\ee
\end{proof}

Using a more general approach to budgeting $\alpha$ and allowing different numbers of decision points for different criteria gives a more general theorem:

\begin{thm} \label{thm_ait}
Let $m$ be the number of criteria and let $d_i$ be the number of decision points for criterion $i$. Let 
\be
\sum_{i = 1}^{m} \sum_{t=1}^{d_i} \alpha_{it} = \alpha
\ee
for nonnegative values $\alpha_{it}$. Then requiring $p$-values
\be
p \leq \alpha_{it}
\ee
at at least one decision point for each criterion gives confidence $1 - \alpha$ that all criteria hold.
\end{thm}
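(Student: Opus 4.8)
The plan is to mirror the proof of Theorem \ref{thm_dm}, replacing the uniform budget $\frac{\alpha}{dm}$ with the per-criterion, per-decision-point budgets $\alpha_{it}$ and invoking Boole's inequality over the resulting collection of ``bad'' events. First I would fix notation for the failure event: a type I error occurs when the testing protocol reports that all criteria hold -- i.e., for each criterion $i$ there is some decision point at which the reported $p$-value is at most the corresponding threshold $\alpha_{it}$ -- and yet in fact some criterion $i_0$ does not hold.

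Next I would observe that this failure event is contained in the union, over all pairs $(i,t)$ with $1 \leq i \leq m$ and $1 \leq t \leq d_i$, of the events $E_{it}$ defined as ``criterion $i$ does not hold, but the reported $p$-value for criterion $i$ at decision point $t$ is at most $\alpha_{it}$.'' Indeed, on the failure event the offending criterion $i_0$ is false and yet was reported as satisfied at some decision point $t_0$, so $E_{i_0 t_0}$ occurs. The key step is then the bound $\prob{E_{it}} \leq \alpha_{it}$, which is exactly the defining property of a valid $p$-value: conditioned on criterion $i$ being false, the chance that its $p$-value at a fixed, a priori decision point falls at or below the level $\alpha_{it}$ is at most $\alpha_{it}$. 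Finally, applying Boole's inequality across all $(i,t)$ gives
\be
\prob{\hbox{type I error}} \leq \sum_{i=1}^{m} \sum_{t=1}^{d_i} \prob{E_{it}} \leq \sum_{i=1}^{m} \sum_{t=1}^{d_i} \alpha_{it} = \alpha,
\ee
so the confidence that all criteria hold is at least $1 - \alpha$.

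The main obstacle is conceptual rather than computational: one must argue that allowing the required significance to occur at \emph{any} one of the $d_i$ decision points for criterion $i$ -- rather than at a single prescribed point -- does not weaken the guarantee. The resolution is that this flexibility only enlarges the family of ways a type I error can arise, and the union over all $(i,t)$ already accounts for every such way; crucially, Boole's inequality requires no assumption about the joint distribution of $p$-values across decision points or across criteria, so no independence or positive-dependence hypotheses enter. A secondary point worth stating carefully is that the decision boundaries (which data each decision point uses) are fixed a priori, so each $E_{it}$ refers to a well-defined $p$-value whose level-$\alpha_{it}$ tail probability is controlled; adaptively chosen boundaries would break the per-point bound $\prob{E_{it}} \leq \alpha_{it}$.
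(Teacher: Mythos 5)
Your proposal is correct and follows essentially the same route as the paper's own proof: bound each ``bad'' event (criterion $i$ false yet $p \leq \alpha_{it}$ at decision point $t$) by $\alpha_{it}$ and apply Boole's inequality over all $(i,t)$ pairs, summing to $\alpha$. Your added remarks about a priori decision boundaries and the absence of any joint-distribution assumptions are consistent with, and merely make explicit, what the paper's argument implicitly relies on.
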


\begin{proof}
The probability that criterion $i$ has $p$-value $\alpha_{it}$ or less at decision point $t$ and the criterion does not hold out of sample is at most $\alpha_{it}$. Applying the sum bound (Boole's inequality), the probability that any criterion $i$ has $p$-value $\alpha_{it}$ or less at any decision point $t$ and the criterion does not hold is at most the sum of $\alpha_{i t}$ values: $\alpha$. So, if each criterion $i$ meets its $p$-value requirement at any of its decision points, then all criteria hold out of sample, with probability at least $1 - \alpha$.
\end{proof}

Theorems \ref{thm_dm} and \ref{thm_ait} allow early stopping because they only require each criterion to meet its $p$-value requirement once. After that, we have our required confidence without needing to continue the test.

\section{Requiring Repetition} \label{sec_rep}
With many decision points $d$ and several criteria $m$ for bucket success, the product $dm$ can be large. So requiring $p$-values below $\frac{\alpha}{dm}$ in Theorem \ref{thm_dm} may require more resources than we wish. Similarly, budgeting $\alpha$ over multiple decision points for each of multiple criteria in Theorem \ref{thm_ait} may leave us with small values $\alpha_{it}$, meaning stringent $p$-value requirements, and hence possibly requiring a larger or longer test than we wish. There is a way to readjust the odds in our favor, so to speak, and we can do so by supplying a mathematical foundation for something that many analysts already do: require conditions to be met over several decision points before making a final decision. 

Recall our example of bicycles that fail with 5\% probability each. Suppose you and three friends are taking some bicycles for a week of riding out in the desert. Together, the four of you take six bicycles, load them on a trailer, and drive it out to your campsite. You never bike so far that you cannot walk back if a bicycle fails, but you want to have enough bicycles that all four of you can go out on the next ride together even if some bicycles fail on previous rides. Since you have four people and six bicycles, having two bicycles fail is acceptable -- you still have four left, so everyone can still ride. 

How confident, statistically speaking, can you be that you can all keep riding, given that each bicycle has a 5\% probability of failing during the trip? Since you can allow two bicycle failures and still all ride, overall failure is now defined as having three or more bicycles fail. The worst-case relationship among bicycle failures is that any one failure is part of three failing on the same trip. In that case, the probability of three or more failing is 5\% times 2 = 10\%. (See Figure \ref{fig_nearly_uniform}.) So we have 90\% confidence that at least four bicycles will survive the trip. (The idea of allowing some failures in order to improve confidence is sometimes called nearly uniform validation \cite{bax_compression,bax16}.)

\begin{figure} 
\includegraphics[width=3.5in]{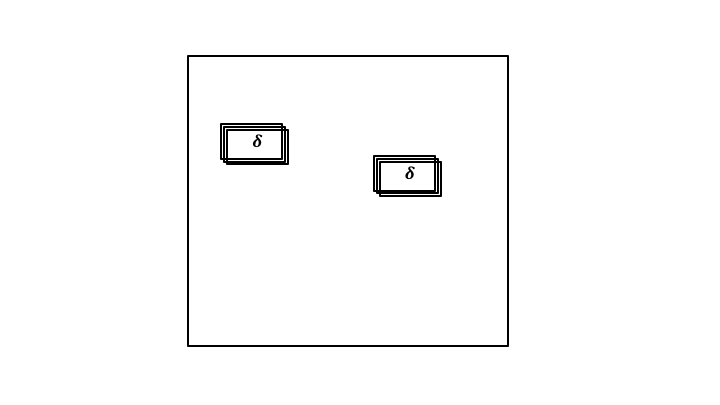} 
\caption{\textbf{Allowing Failures/Requiring Repetition}. If six statements each have probability at most $\delta$ of being incorrect, then the worst-case probability that three or more are incorrect is $2 \delta$. The worst case is that any failure is simultaneous with two others -- with six carpets, laying them three-thick only covers an area equal to two carpets. (Figure from \cite{bax16}.)} \label{fig_nearly_uniform}
\end{figure}

In general, the probability of $r$ or more events is bounded by the sum of event probabilities divided by $r$. In the worst case, each event occurrence is part of a simultaneous occurrence of exactly $r$ events. Imagine having a type of carpet you hate, because it is so deep and squishy that it is difficult to walk upon. If you have 6 of those carpets and each can cover 5\% of your floor, then together they might cover 30\% of your floor. But if you decide you don't mind the depth and squishiness unless you have to step on three of those carpets piled on top of each other, then the worst case is two piles of three that each cover 5\%, for a total of 10\% of your floor. 

Stated as a theorem:

\begin{thm} \label{thm_rdm}
Let $d$ be the number of decision points and $m$ be the number of bucket criteria. Then requiring $p$-values
\be
p \leq \frac{\alpha r}{d m}
\ee
at least $r$ times for each criterion gives confidence $1 - \alpha$ that all criteria hold.
\end{thm}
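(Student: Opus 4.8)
The plan is to mirror the proof of Theorem~\ref{thm_ait}, but to replace the plain union bound \emph{within} a single criterion by the ``$r$ or more events'' bound stated just before this theorem. First I would isolate that combinatorial fact: for any events $A_1,\dots,A_d$, the probability that at least $r$ of them occur is at most $\frac{1}{r}\sum_{j=1}^{d}\prob{A_j}$. I would justify it in one line via Markov's inequality applied to the count $N = \sum_{j=1}^{d}\mathbf{1}[A_j]$: the event that $r$ or more of the $A_j$ occur is exactly $\set{N \geq r}$, so $\prob{N \geq r} \leq \mathbb{E}[N]/r = \frac{1}{r}\sum_{j=1}^{d}\prob{A_j}$ by linearity of expectation. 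This is the ``carpets stacked $r$-deep'' / nearly uniform validation picture the text describes, and crucially it needs no independence or joint-distribution assumptions among the $A_j$.

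Next I would fix a criterion $i$ and, for each decision point $t = 1,\dots,d$, let $E_{it}$ be the event that the testing system reports a $p$-value at most $\frac{\alpha r}{dm}$ for criterion $i$ at decision point $t$ while criterion $i$ does not hold out of sample. By the defining property of $p$-values used in Theorems~\ref{thm_dm} and \ref{thm_ait}, $\prob{E_{it}} \leq \frac{\alpha r}{dm}$. We wrongly stop on criterion $i$ only if criterion $i$ fails out of sample yet its $p$-value requirement is met at $r$ or more of its $d$ decision points; this event is contained in the event that at least $r$ of $E_{i1},\dots,E_{id}$ occur. Applying the bound above,
\be
\prob{\text{wrongly stop on criterion } i} \leq \frac{1}{r}\sum_{t=1}^{d}\prob{E_{it}} \leq \frac{1}{r} \cdot d \cdot \frac{\alpha r}{dm} = \frac{\alpha}{m}.
\ee

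Finally I would apply Boole's inequality over the $m$ criteria: the probability that \emph{some} criterion fails out of sample despite meeting its repeated-significance requirement is at most $m \cdot \frac{\alpha}{m} = \alpha$. On the complementary event, which has probability at least $1 - \alpha$, every criterion whose $p$-value drops to $\frac{\alpha r}{dm}$ or below at $r$ or more decision points does hold out of sample, giving the claimed confidence $1 - \alpha$. The step I would be most careful with is the ``$r$ or more'' bound and its interface with the rest of the argument --- in particular, checking that the set containment in the second paragraph is valid without any assumption on how the $E_{it}$ for different $t$ relate, so that the method stays assumption-free. Everything else is the same $\alpha$-budgeting-by-summation pattern as in the earlier theorems.
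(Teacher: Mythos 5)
Your proof is correct, but its organization differs from the paper's. The paper applies the nearly-uniform-validation (``$r$ or more'') bound once, globally, to all $dm$ criterion-by-decision-point pairs: the probability of $r$ or more incorrect significance reports among all $dm$ pairs is at most $\frac{dm}{r}\cdot\frac{\alpha r}{dm}=\alpha$, and on the complement (fewer than $r$ incorrect reports in total) any criterion that achieved significance at $r$ decision points must have at least one correct report, hence holds. You instead apply the $r$-or-more bound within each criterion, obtaining $\frac{\alpha}{m}$ per criterion, and then take Boole's inequality over the $m$ criteria --- which is exactly the structure the paper uses for its more general Theorem~\ref{thm_aitr}, specialized here to uniform $\alpha$-spending. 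Both decompositions spend the same budget and reach the same constant $\alpha$; your per-criterion failure event (criterion $i$ fails out of sample yet is reported significant $r$ times) is contained in the paper's global event ($r$ or more incorrect reports anywhere), so you are bounding a slightly smaller event by the same $\alpha$, and the paper's closing pigeonhole-style sentence becomes unnecessary in your version. A small bonus of your write-up is the explicit Markov-inequality justification of the $r$-or-more bound, $\prob{N \geq r} \leq \mathbb{E}[N]/r$ with $N$ the count of occurring events, which the paper states only informally via the carpet analogy; and, as you correctly emphasize, neither route requires any assumption about the joint distribution of the events $E_{it}$ across decision points or criteria.
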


\begin{proof}
Use nearly uniform validation. The probability of the event that the AB testing system incorrectly reports that a criterion holds at a decision point with a $p$-value of $\frac{\alpha r}{d m}$ or less is at most $\frac{\alpha r}{d m}$. So the probability of at least $r$ such events is the number of events divided by $r$: 
\be
\frac{d m}{r} \frac{\alpha r}{d m} = \alpha.
\ee
If there are fewer than $r$ such events, then requiring that each criterion hold $r$ times means that the AB system is correct about each criterion at least once. 
\end{proof}

\begin{figure} 
\includegraphics[width=3.5in]{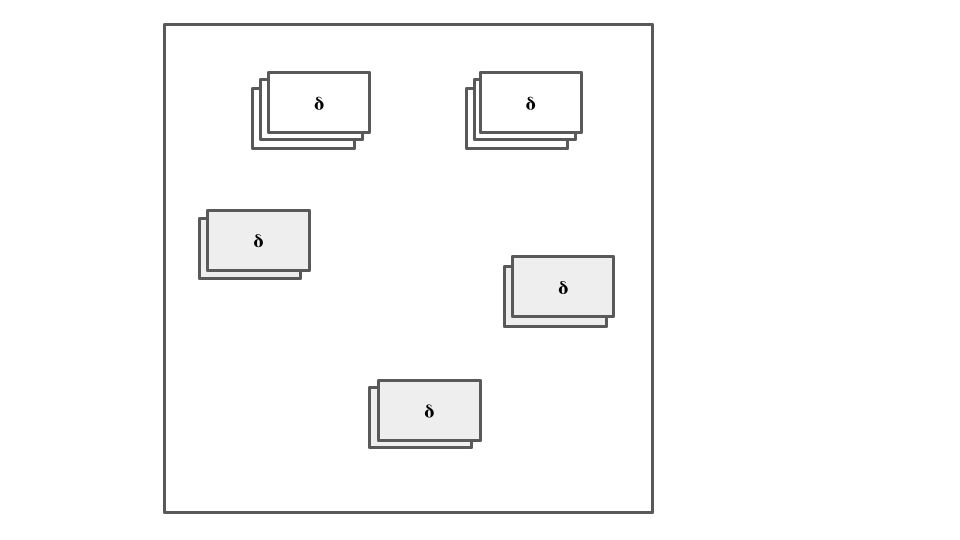} 
\caption{\textbf{Different Numbers of Repetitions}. Suppose we have two criteria for bucket success, and we require the first to hold three times and the second to hold twice to declare the test a success. Suppose there are $d = 6$ decision points and each statement that a criterion holds has probability $\delta$ of being incorrect. Then an incorrect conclusion from the bucket requires either incorrect statements that criterion one holds at three different decision points (white ``carpets" piled three-deep) or incorrect statements that criterion two holds at two different decision points (gray ``carpets" piled two-deep). So the probability of bucket success without the criteria actually both holding is at most $5 \delta$.} \label{fig_two_over_time}
\end{figure}

For a more general approach to $\alpha$-budgeting, allow different numbers of decision points for different criteria, and require different numbers of repetitions for different criteria. (Refer to Figure \ref{fig_two_over_time} for a specific example.)

\begin{thm} \label{thm_aitr}
Let $m$ be the number of criteria, let $d_i$ be the number of decision points for criterion $i$, and let $r_i$ be the number of repetitions required for criterion $i$.  Let 
\be
\sum_{i = 1}^{m} \sum_{t=1}^{d_i} \alpha_{it} = \alpha
\ee
for nonnegative values $\alpha_{it}$. Then requiring $p$-values
\be
p \leq \alpha_{it} r_i
\ee
at at least $r_i$ decision points for each criterion $i$ gives confidence $1 - \alpha$ that all criteria hold.
\end{thm}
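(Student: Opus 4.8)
The plan is to compose the two techniques already in play: the ``divide by $r$'' nearly-uniform bound of Theorem \ref{thm_rdm}, applied separately within each criterion's set of decision points, followed by Boole's inequality across the $m$ criteria, exactly as in Theorem \ref{thm_ait}. First I would fix a criterion $i$ and, for each decision point $t \in \{1, \ldots, d_i\}$, define the bad event $B_{it}$: the AB testing system reports a $p$-value of $\alpha_{it} r_i$ or less for criterion $i$ at decision point $t$ while criterion $i$ does not actually hold out of sample. By the meaning of a $p$-value, $\prob{B_{it}} \leq \alpha_{it} r_i$.

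Next I would establish the key logical link: if criterion $i$ does not hold out of sample and yet its stopping rule is met -- the system reports $p \leq \alpha_{it} r_i$ at $r_i$ or more of the decision points $t$ -- then at least $r_i$ of the events $B_{i1}, \ldots, B_{i d_i}$ occur. Applying the nearly-uniform bound (the probability that $r_i$ or more of a collection of events occur is at most the sum of their probabilities divided by $r_i$), the probability that criterion $i$ ``fails us'' in this way is at most
\[
\frac{1}{r_i} \sum_{t=1}^{d_i} \prob{B_{it}} \;\leq\; \frac{1}{r_i} \sum_{t=1}^{d_i} \alpha_{it} r_i \;=\; \sum_{t=1}^{d_i} \alpha_{it};
\]
the factor $r_i$ built into the $p$-value threshold is precisely what cancels the $1/r_i$ gained by allowing up to $r_i - 1$ spurious significant results.

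Finally I would apply the sum bound (Boole's inequality) over the criteria: the probability that some criterion fails us is at most $\sum_{i=1}^m \sum_{t=1}^{d_i} \alpha_{it} = \alpha$. Hence with probability at least $1 - \alpha$, every criterion whose repetition requirement is satisfied actually holds out of sample -- in particular, when the requirement is met for all criteria and we declare the test a success, all criteria hold.

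The part that needs care -- and essentially the only nontrivial step -- is the logical link showing that ``criterion $i$ does not hold, but its $r_i$-repetition requirement is satisfied'' forces at least $r_i$ of the $B_{it}$ to occur. Once that is nailed down, the two bounds compose mechanically; and because the cross-criterion step uses only the sum bound while the within-criterion step uses only the count/$r_i$ bound, no assumptions are needed about the joint distribution of $p$-values across decision points or across criteria, which is the generality the paper is after.
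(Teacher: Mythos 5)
Your proposal is correct and follows essentially the same route as the paper's own proof: bound each bad event's probability by $\alpha_{it} r_i$, apply the nearly-uniform (count divided by $r_i$) bound within each criterion so the $r_i$ factors cancel, then combine across criteria with Boole's inequality. Your explicit statement of the logical link -- that a criterion failing out of sample while meeting its $r_i$-repetition requirement forces at least $r_i$ bad events -- is a useful clarification the paper leaves implicit, but the argument is the same.
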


\begin{proof}
The probability that criterion $i$ has $p$-value $\alpha_{it} r_i$ or less at decision point $t$ and the criterion does not hold out of sample is at most $\alpha_{it} r_i$. So, by nearly uniform validation, the probability that criterion $i$ has $p$-value $\alpha_{it} r_i$ or less at $r_i$ or more decision points, and the criterion does not hold out of sample, is at most
\be
\frac{\sum_{t = 1}^{d_i} \alpha_{it} r_i}{r_i} = \sum_{t = 1}^{d_i} \alpha_{it}.
\ee
Using the sum bound (Boole's inequality), the probability that that occurs for at least one of the $m$ criteria is at most the sum of those probabilities:
\be
\sum_{i = 1}^{m} \sum_{t=1}^{d_i} \alpha_{it} = \alpha.
\ee
So, if each criterion $i$ meets its $p$-value requirement at any $r_i$ of its decision points, then all criteria hold out of sample, with probability at least $1 - \alpha$.
\end{proof}

Theorems \ref{thm_rdm} and \ref{thm_aitr} allow early stopping: after each criterion meets its $p$-value requirement the required number of times, the test can be stopped with probability at most $\alpha$ of a type I error. Theorems \ref{thm_rdm} and \ref{thm_aitr} exchange less stringent $p$-value requirements than Theorems \ref{thm_dm} and \ref{thm_ait} for requiring repetition.

Theorem \ref{thm_aitr} can be useful if some criteria are likely to require more time or samples while others are likely to be met easily and continue to hold. The decisions about how many repetitions to require need to be made a priori, but in many cases information is available about how quickly $p$-values tend to decrease and settle for different metrics and their criteria. That information enables a priori optimization of the test plan. 

\section{Continuous Monitoring and General Result} \label{sec_cm}
Consider an AB test with a single criterion ($m = 1$) that may last as long as 20 days. Suppose we decide to divide $\alpha$ equally over $d$ decision points and require 5\% repetition: $r = 0.05 d$. Then, by Theorem \ref{thm_rdm}, the $p$-value requirements are 
\be
p \leq \frac{\alpha}{d} r = \frac{\alpha}{d} 0.05 d = 0.05 \alpha = \frac{\alpha}{20}.
\ee
Note that the required $p$-value is independent of the number of decision points. It is the same whether the decision points are daily ($d = 20$), hourly ($d = 480$), or with each new observation, which is the setting for continuous monitoring \cite{deng16,johari21,maharaj23,grunwald24,waudbysmith24}. In each case, the test yields the same level of confidence: $1 - \alpha$. In short, requiring significance for a fixed fraction of decision points allows us to increase the number of decision points without limit and maintain the same level of confidence. 

In general, for a fixed rate of repetition $u = \frac{r}{d}$, adjusting $r$ to maintain $u$ as $d$ increases maintains a fixed significance requirement ($p \leq u \alpha$) and a fixed level of confidence $1 - \alpha$. If the decision points are evenly spaced over the experiment (in time or in number of observations), and the required $p$-value persists once it is achieved, then adding decision points only extends the length of the experiment before early stopping by up to $u$ of the length of the experiment. 

In our earlier example, with $u = 0.05$, using daily decision points ($d = 20$) allow us to stop the experiment when the $p$-value requirement is first met. But using hourly or even per-observation decision points only requires us to persist for up to one day (5\% of the experiment) after achieving significance for the first time, if significance continues to hold. 

And this is in the worst case: that significance begins at the hour or observation that coincides with the daily check. Otherwise, some of the 5\% of decision points with significance have already occurred by the daily check, so fewer than 5\% remain before early stopping. Similarly, if significance fails to hold at first, then holds off and on for a part of the experiment, then holds for the remainder of the experiment, then it is possible to stop earlier with more decision points. 

Note that requiring the criterion to hold with significance for a fixed fraction of the decision points in the test is different from stopping the experiment early if the criterion holds for a fixed fraction of the decision points that have occurred. The results in this section do not apply directly to that case; we address it next.

\begin{thm} \label{thm_unlimited}
Let $u$ be a fixed repetition rate that we specify, let $s$ be a minimum number of decision points for which we decide to run the test. If after the first $t$ decision points, at least a fraction $u$ of the $p$-values for a single criterion have
\be
p \leq \alpha u \frac{s}{4 t},
\ee
then the criterion holds out of sample with confidence at least $1 - \alpha$. 
\end{thm}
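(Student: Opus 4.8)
The plan is to read the statement as a bound on the type I error of the implied stopping rule: assuming the criterion does \emph{not} hold out of sample, I will show that the probability that \emph{some} $t \ge s$ satisfies the displayed condition is at most $\alpha$, so that concluding the criterion holds whenever such a $t$ occurs gives confidence at least $1 - \alpha$. The obstacle to invoking Theorem~\ref{thm_rdm} or Theorem~\ref{thm_aitr} directly is that here the number of potential stopping points is unbounded and the threshold $\alpha u s / (4t)$ depends on the (random) stopping time $t$; a flat union bound over all $t$ would involve $\sum_t 1/t$, which diverges. The remedy is a geometric (doubling) decomposition of the time axis that converts this into a convergent series, with the factor $4$ in the threshold absorbing the two factors of $2$ the argument produces.

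Concretely, I would partition the potential stopping times into blocks $B_k = \{\, t : 2^{k-1} s \le t < 2^k s \,\}$ for $k = 1, 2, \ldots$, which exactly covers all $t \ge s$. The key observation is that stopping inside block $B_k$ forces one block-level event. If the condition holds at some $t \in B_k$, then at least $u t \ge u\,2^{k-1} s$ of the first $t < 2^k s$ $p$-values are at most $\alpha u s/(4t)$; since that threshold is decreasing in $t$ and $t \ge 2^{k-1} s$, each of those $p$-values is in fact at most $\alpha u s /(4 \cdot 2^{k-1} s) = \alpha u / 2^{k+1}$. Hence stopping in $B_k$ implies the event $E_k$: ``among the first $2^k s$ decision points, at least $u\,2^{k-1} s$ of the $p$-values are at most $\alpha u / 2^{k+1}$, while the criterion does not hold.'' Note the inequalities $2^{k-1} s \le t < 2^k s$ are used three times here — to lower-bound the repetition count, to bound the index range, and to bound the threshold — so the main care point is keeping them consistent.

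Next I would bound the probability of $E_k$ by nearly uniform validation, exactly as in the proof of Theorem~\ref{thm_rdm}: given that the criterion fails, each of the $2^k s$ $p$-values is at most $\alpha u / 2^{k+1}$ with probability at most $\alpha u / 2^{k+1}$ (the defining property of a $p$-value), so the expected number of such $p$-values is at most $2^k s \cdot \alpha u / 2^{k+1} = \alpha u s / 2$, and dividing by the required count $u\,2^{k-1} s$ gives probability at most $\alpha / 2^k$. Finally, applying the sum bound (Boole's inequality) over the blocks, the probability that the procedure ever stops erroneously is at most $\sum_{k \ge 1} \alpha / 2^k = \alpha$, giving confidence $1 - \alpha$ and, in particular, a valid rule with no a priori bound on the number of decision points. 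I expect the only delicate step to be this block setup and constant bookkeeping; the probabilistic content is just the $p$-value property, Markov's inequality, and a geometric union bound, none of which requires any assumption on the joint distribution of the $p$-values across decision points.
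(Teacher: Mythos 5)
Your proof is correct and is essentially the paper's own argument: the same doubling decomposition at waypoints $2^k s$ with geometric spending $\alpha 2^{-k}$, the same repetition requirement $u\,2^{k-1}s$ and threshold $\alpha u/2^{k+1}$ per block, and a final geometric union bound. The only cosmetic difference is that you spell out the nearly uniform validation step directly via the $p$-value property, expected counts, and Markov's inequality, whereas the paper invokes Theorem~\ref{thm_rdm} for each subtest.
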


\begin{proof}
We will spend $\alpha$ over subtests with increasing length. Then we will apply the result from the smallest subtest of length at least $t$, applying early stopping from that subtest to our results.

Define a sequence of waypoints $w_k = 2^k s$ for $k = 1, 2, \ldots$. Waypoint $w_k$ is the endpoint for subtest $k$. Let $\alpha_k = \alpha 2^{-k}$ for $k = 1, 2, \ldots$. Let $r_k = \frac{u w_k}{2}$. Then for each subtest, if at least $r_k$ of the first $w_k$ decision points have 
\be
p \leq \frac{\alpha_k}{w_k} r_k,
\ee
then the probability that the criterion fails to hold out of sample is at most $\alpha_k$, by Theorem \ref{thm_rdm}, with $\alpha$ set to $\alpha_k$, $m = 1$ since we have a single criterion, and $d = w_k$.

Note that the sum of all $\alpha_k$ values is $\alpha$. So by the sum bound (Boole's inequality), all subtest results are valid, with probability at least $1 - \alpha$. 

Let $\hat{k}$ be the least value such that $w_{\hat{k}}$ is $t$ or greater. Then $w_{\hat{k}} < 2t$, so having significance for at least $u$ of the $t$ decision points meets the repetition requirement for the subtest, since
\be
r_{\hat{k}} = \frac{u w_{\hat{k}}}{2} < u t.
\ee

It remains to show that the required $p$-value for subtest $\hat{k}$ is at least the $p$-value in the theorem.
\be
\alpha_{\hat{k}} r_{\hat{k}}  
\ee
Expand $\alpha_{\hat{k}}$ and $r_{\hat{k}}$. 
\be
= \frac{\alpha 2^{-\hat{k}}}{w_{\hat{k}}} \frac{u w_{\hat{k}}}{2} 
\ee
Cancel $w_{\hat{k}}$ from the numerator and denominator.
\be
= \frac{1}{2} \alpha u 2^{-\hat{k}}
\ee
Multiply and divide by s.
\be
= \frac{1}{2} \alpha u \frac{s}{2^{\hat{k}} s}
\ee
Since $w_{\hat{k}} = 2^{\hat{k}} s < 2t$, this is 
\be
>  \frac{1}{2} \frac{\alpha s u}{2 t} = \alpha u \frac{s}{4 t}.
\ee
\end{proof}

With unlimited testing, stopping when $u$ of the decision points seen so far are significant for the criterion (rather than significant for $u$ of all decision points) makes the $p$-value requirements more stringent by a factor of $ \frac{s}{4 t}$. That factor has a tunable parameter $s$. Selecting it mediates a tradeoff: increasing $s$ increases the amount of testing required before early stopping is allowed, while making the $p$-value requirement less stringent. 

The following theorem is a generalization of this result. It applies to tests with pre-specified maximum sizes as well as unlimited tests. To apply it to the conjunction of multiple criteria, partition $\alpha$ over the criteria, then apply this theorem to each one separately, and combine by Boole's inequality. 

\begin{thm} \label{thm_general}
Suppose there is a single criterion and $v$ subtests, indexed by $k = 1, \ldots, v$, each with a repetition requirement $r_k$, a starting decision point index $a_k$, and an ending decision point index $b_k$. Let 
\be
\sum_{k = 1}^{v} \sum_{t = a_k}^{b_k} \alpha_{kt} = \alpha
\ee
for nonnegative values $\alpha_{kt}$ specified prior to testing. If at any decision point $t$, for any $k$, there are at least $r_k$ decision points in $a_k, \ldots, \min(t, b_k)$ with 
\be
p \leq \alpha_{kt} r_k
\ee
then the criterion holds out of sample with confidence $1 - \alpha$.
\end{thm}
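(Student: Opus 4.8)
The plan is to treat each of the $v$ subtests as a self-contained ``repeated significance'' test allotted its own slice of the type I error budget, to bound the false-stopping probability of each subtest by nearly uniform validation exactly as in Theorems~\ref{thm_rdm} and~\ref{thm_aitr}, and then to stitch the subtests together with Boole's inequality; the early-stopping conclusion then follows from a short bookkeeping argument.

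First I would fix a subtest $k$ and set $\beta_k = \sum_{t=a_k}^{b_k}\alpha_{kt}$, so that $\sum_{k=1}^{v}\beta_k = \alpha$. Looking only at subtest $k$'s decision points $a_k,\ldots,b_k$: for each such $t$, the event that the $p$-value reported at $t$ is at most $\alpha_{kt}r_k$ while the criterion nonetheless fails out of sample has probability at most $\alpha_{kt}r_k$, the same fact used in the proofs of the earlier theorems. By nearly uniform validation --- the probability that at least $r_k$ of a family of events occur is at most the sum of their probabilities divided by $r_k$ --- the probability that at least $r_k$ of subtest $k$'s decision points clear their threshold while the criterion fails is at most $\frac{1}{r_k}\sum_{t=a_k}^{b_k}\alpha_{kt}r_k = \beta_k$. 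This is exactly the single-criterion instance of Theorem~\ref{thm_aitr} applied to subtest $k$ with budget $\beta_k$.

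Next I would take the union bound over $k = 1,\ldots,v$: the probability that \emph{some} subtest $k$ has at least $r_k$ of its decision points $a_k,\ldots,b_k$ clear their thresholds while the criterion fails out of sample is at most $\sum_{k=1}^{v}\beta_k = \alpha$. For the early-stopping claim, note that the rule in the theorem triggers at decision point $t$ for subtest $k$ only when at least $r_k$ decision points in $a_k,\ldots,\min(t,b_k)$ have already cleared the threshold, and every such decision point lies in $a_k,\ldots,b_k$; hence ``the rule ever triggers while the criterion fails'' is a sub-event of the event just bounded, so it also has probability at most $\alpha$. Consequently, whenever the rule triggers the criterion holds out of sample with confidence $1-\alpha$, and since subsequent decision points can only enlarge each subtest's running count, testing may be stopped the instant the rule first triggers.

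The step I expect to require the most care is the early-stopping bookkeeping: I want to be certain that ranging over all decision points $t$ and all subtests $k$ costs nothing beyond the per-subtest budgets $\beta_k$, and the reason it does not is that a trigger at time $t$ is contained in the full-range repetition event for that subtest, so no extra union bound over candidate stopping times is incurred. It is also worth checking that the statement genuinely subsumes the earlier results: a single subtest recovers the single-criterion case of Theorem~\ref{thm_aitr}, while the nested choice $a_k = 1$, $b_k = 2^k s$, $r_k = u\,2^k s/2$, and $\alpha_{kt} = \alpha\,2^{-k}/(2^k s)$ recovers Theorem~\ref{thm_unlimited}, because for the shortest subtest reaching $t$ its constant-in-$t$ threshold exceeds $\alpha u s/(4t)$.
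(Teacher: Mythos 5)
Your proposal is correct and follows essentially the same route as the paper: define each subtest's budget $\beta_k=\sum_{t=a_k}^{b_k}\alpha_{kt}$, bound its false-success probability by the single-criterion case of Theorem~\ref{thm_aitr} (nearly uniform validation), and combine the subtests with Boole's inequality to spend exactly $\alpha$. The additional bookkeeping you include --- that a trigger at any time $t$ is a sub-event of the subtest's full-range repetition event, so no further union over stopping times is needed --- is a correct elaboration of a step the paper leaves implicit.
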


\begin{proof}
For each subtest $k$, let $\alpha_k = \sum_{t = a_k}^{b_k} \alpha_{kt}$. Apply Theorem \ref{thm_aitr}, with $m = 1$ (single criterion), $\alpha = \alpha_k$, $\alpha_{1t} = \alpha_{kt}$, and $r = r_k$. Then the probability that subtest $k$ succeeds (that there are $r_k$ decision points with the specified $p$-values in decision points $[a_k, b_k]$) yet the criterion does not hold out of sample is at most $\alpha_k$. By Boole's inequality (the sum bound on the probability of a union), the probability that any subtest succeeds and the criterion fails to hold out of sample is the sum of $\alpha_k$ over the subtests:
\be
\sum_{k = 1}^{v} \alpha_k = \sum_{k = 1}^{v} \sum_{t = a_k}^{b_k} \alpha_{kt} = \alpha.
\ee
\end{proof}

Applying this theorem requires specifying subtests: repetition requirements $r_k$, start and end decision point indices $a_k$ and $b_k$, and a partition of $\alpha$ into values $\alpha_{kt}$. Theorem \ref{thm_unlimited} can be seen as a corollary of Theorem \ref{thm_general}, and it illustrates that $v$ need not be bounded. To prevent over-splitting $\alpha$, it can be useful to limit the set of subtests, for example using $b_k = 2^k$ instead of $b_k = k$ and limit the scope of subtests, for example using low values for $b_k$ for low $r_k$ values. 

Two strategies to spend $\alpha$ are uniform and geometric. Uniform spending is equal over divisions: subtests, decision points, or criteria. Geometric spending allocates a fraction of the remaining budget for each division: select a ``withdrawal rate" $0 < w < 1$ and set $\alpha_1 = w \alpha$, $\alpha_2 = w (1 - w) \alpha$, and in general $\alpha_j = w (1 - w)^{j - 1} \alpha$. This allows an unbounded number of divisions. (Theorem \ref{thm_unlimited} uses geometric spending with $w = \frac{1}{2}$.)

\section{Analysis and Type II Error Strategy} \label{sec_type2}
The methods in this paper all guarantee that the probability of type I error (false positive) is at most $\alpha$. But different choices of number of decision points and required numbers of repeats can lead to different probabilities of type II error (false negative). This section explores, in general terms, how $\alpha$ spending decisions and choices of required numbers of repeats relate to test sizes required to achieve type II error similar to that achieved without early stopping. It does so through the lens of $Z$ scores required to achieve statistical significance, because $Z$ scores are a more familiar way of thinking about significance than $p$-values for many practitioners.

\begin{figure} 
\includegraphics[width=3.5in]{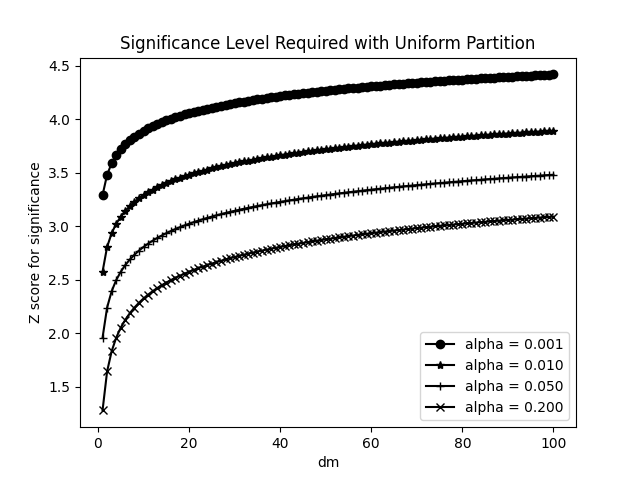} 
\caption{\textbf{$\mathbf{Z}$ score required, by $\mathbf{dm}$}. Each $Z$ score is the inverse of standard normal cdf for $1 - \frac{p}{2}$ with $p = \frac{\alpha}{dm}$. The $Z$ score required for significance under a uniform partition of $\alpha$ increases as the number of decision points $d$ and criteria $m$ increase, slicing $\alpha$ more finely. The plots begin at $dm = 1$. The $Z$ score required for $dm = 1$ is the $Z$ score without the possibility of early stopping (one decision point at the end of the test) and for a single criterion. The required $Z$ scores have decreasing marginal increases as $dm$ increases.} \label{fig_zdm}
\end{figure}

Figure \ref{fig_zdm} shows the $Z$ scores required for significance given confidence level $1 - \alpha$, $d$ decision points, and $m$ criteria, with $\alpha$ partitioned uniformly over the $d m$ criteria by decision points, under the approximation that the criteria metrics have normal distributions. So $Z$ is the inverse of the standard normal cdf for $1 - \frac{p}{2}$, with $p$ the required $p$-value for significance: $p = \frac{\alpha}{d m}$. (The 2 in $1 - \frac{p}{2}$ reflects two-sided criteria.) 

Note that for $\alpha = 0.05$ and $d m = 1$, the required $Z$ score is approximately 2. This corresponds to the well-known 2 SEM (standard error of the mean) rule for tests with 95\% confidence ($\alpha = 5\%$) and a single criterion ($m = 1$) and one decision point ($d = 1$), at the end of the test. (In the plotted lines, the first value for $d m$ is one.) As $d m$ increases, $\alpha$ is partitioned more finely, reducing the $p$-values, and hence increasing the $Z$ values, required for significance. 

Consider a criterion that the mean, over the observations collected in the test, of some metric is significantly greater than or less than zero. Let $\mu$ be the mean of the distribution that generates the observations, and let $\sigma$ be the standard deviation. Let $n$ be the number of observations collected prior to a decision point, and let $\hmu$ be the mean and $\hsigma$ be the standard deviation over those observations. If 
\be
\left| \hmu \right| = \frac{Z \hsigma}{\sqrt{n}} 
\ee
then the $Z$ score at the decision point is $Z$. Solve for $n$:
\be
n = Z^2 \left(\frac{\hsigma}{\hmu}\right)^2.
\ee
For large numbers of observations, $\hmu \approx \mu$ and $\hsigma \approx \sigma$, and
\be
n \approx Z^2 \left(\frac{\sigma}{\mu}\right)^2.
\ee
So the test size (in observations) required to achieve significance tends to be loosely proportional to $Z^2$. (Loosely, because we are not accounting for $\hmu$ and $\hsigma$ not being exactly $\mu$ and $\sigma$, respectively, or for $\hmu$ and $\hsigma$ tending to be less noisy, and approximate $\mu$ and $\sigma$ more closely, for more observations.) As a result, if we double the $Z$ score required for significance, for example, then the probability of type II error becomes that of a test that has only about a quarter as many observations. (See Figure \ref{fig_zp}.)

\begin{figure} 
\includegraphics[width=3.5in]{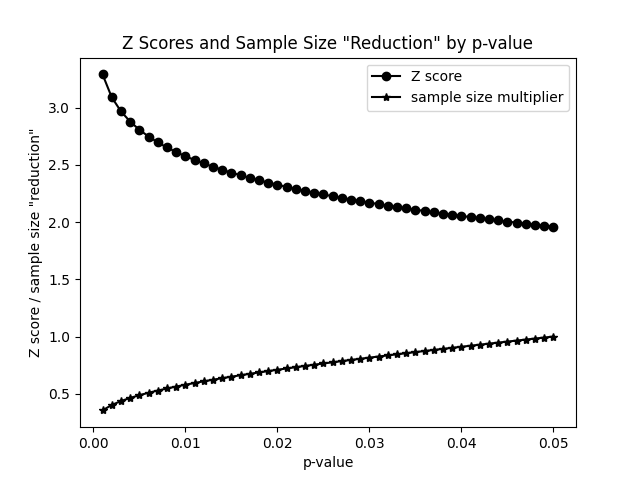} 
\caption{\textbf{$\mathbf{Z}$ score and sample size impact for partitioning $\mathbf{\alpha}$}. The top curve is the $Z$ score required for statistical significance given the $p$-value required for statistical significance on the $x$-axis. The bottom curve is the approximate reduction in sample size (compared to $p = 0.05$) for the purpose of achieving statistical significance to avoid type II error. (Each point in the lower curve is the square of the ratio of the $Z$ score for $p = 0.05$ to the $Z$-score for the $p$-value on the $x$-axis.} \label{fig_zp}
\end{figure}

Suppose a test has a single criterion ($m = 1$) and we want 95\% confidence $\alpha = 0.05$. For a single decision point at the end of the test (no early stopping), we have $Z \approx 1.96$ required for significance. If, instead, we use 20 decision points with uniform $\alpha$ spending, then $Z \approx 3.02$ (See Figure \ref{fig_zdm}.) In exchange for allowing early stopping at 19 points in addition to the end of the test, the probability of type II error becomes approximately that of a test $(1.96/3.02)^2 \approx 40\%$ as long without early stopping. 

One strategy to reduce the probability of type II error while maintaining the possibility of early stopping with confidence is to spend a large portion of $\alpha$ on the decision point at the end of the test, and spend $\alpha$ uniformly over the other decision points. (This is the general strategy of the Haybittle-Peto boundary for sequential testing \cite{haybittle71,peto76}.) For example, suppose we allocate half of $\alpha = 0.05$ to the final decision point and the other half equally among 19 other decision points. For the final point, $Z \approx 2.24$ (the second value for the $\alpha = 0.05$ line in Figure \ref{fig_zdm}), giving approximately the probability of type II error of a test $(1.96/2.24)^2 \approx 77\%$ as long without early stopping. For the 19 other decision points, the required $Z$ score for early stopping becomes approximately $3.21$ rather than the $3.02$ for uniform spending with $d = 20$. Since $(3.21/3.02)^2 \approx 1.13$, it may take approximately 13\% longer to achieve this higher $Z$ score in order to stop early. 

To achieve similar results for a very large number of decision points, allocate a share of $\alpha$ (say $\theta \alpha$ for $0 < \theta < 1$) to the final decision point, then allocate the remainder to the other points, and require repetition over the other decision points but not over the final decision point. Then the test is a success if there is a significant result if for each criterion either ($p \leq \frac{\theta \alpha}{m}$) at the final decision point or at least $r$ of the other decision points have $p \leq \frac{(1 - \theta) \alpha r}{(d - 1) m}$. Increasing $\theta$ decreases the probability of type II error but de-emphasizes early stopping. 

Figure \ref{fig_zu} shows $Z$ values required for significance if a portion $u = \frac{r}{d}$ of the decision points are required to have significant results for test success, based on uniform $\alpha$ spending over decision points, for a single criterion. Requiring $u$ of the results at decision points to be significant results gives the same $Z$ score requirements as having $d = \frac{1}{u}$ decision points. So, for example, on the $\alpha = 0.05$ line in Figure \ref{fig_zu}, for $u = 0.05$, $Z \approx 3.02$ -- the same value as for $d m = 20$ on the $\alpha = 0.05$ line in Figure \ref{fig_zdm}. 

\begin{figure} 
\includegraphics[width=3.5in]{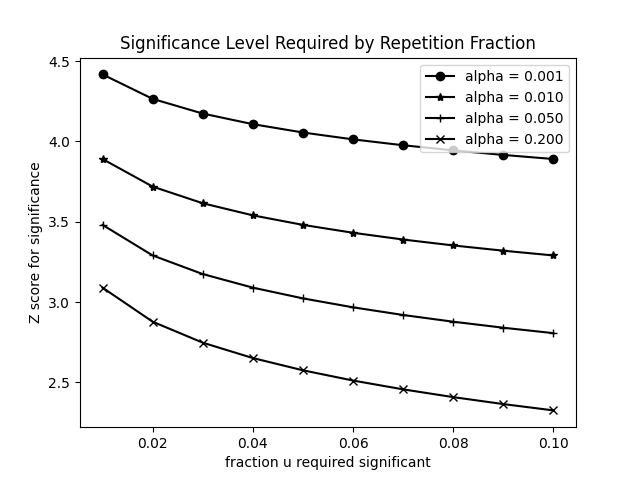} 
\caption{\textbf{$\mathbf{Z}$ score required, by $\mathbf{u}$}. The $Z$ score required for significance varies with the fraction of decision points $u$ at which we require significance in order to stop the test with $1 - \alpha$ confidence that a criterion holds out of sample. For comparison, without early stopping, and with $\alpha = 0.05$, $Z \approx 2$ is the required score. Requiring significance at 5\% of decision points ($u = 0.05$) makes the required score $Z \approx 3$. Requiring a larger fraction of significant results has (gently) decreasing marginal returns in reducing required $Z$ scores.} \label{fig_zu}
\end{figure}

Now we will briefly compare the method of requiring repeated significance to a recent always-valid method \cite{waudbysmith24} (their Equation 8, page 8). That method requires $Z$ score 
\be
\sqrt{\frac{2(t \rho^2 + 1)}{t \rho^2} \ln \left(\frac{\sqrt{t \rho^2 + 1}}{\alpha}\right)}
\ee
and has a parameter $\rho$ that controls the value of $t$ at which the required $Z$ score is minimized. This method has the advantage of not requiring repeated significance to stop early, and it does not require the test plan to specify a test size -- the test may continue indefinitely. Figure \ref{fig_zrhovu} compares it to requiring a fraction $u$ of the decision points to have 
\be
p \leq \alpha u,
\ee
with the number of observations and decision points determined before testing. Requiring 5\% of decision points to have significance matches the performance of the always-valid method. If there are 20 decision points, then that allows early stopping at the first significant result. For a decision point at every observation, 100,000 of the 20 million observations would need significant results to allow early stopping with confidence. 

\begin{figure} 
\includegraphics[width=3.5in]{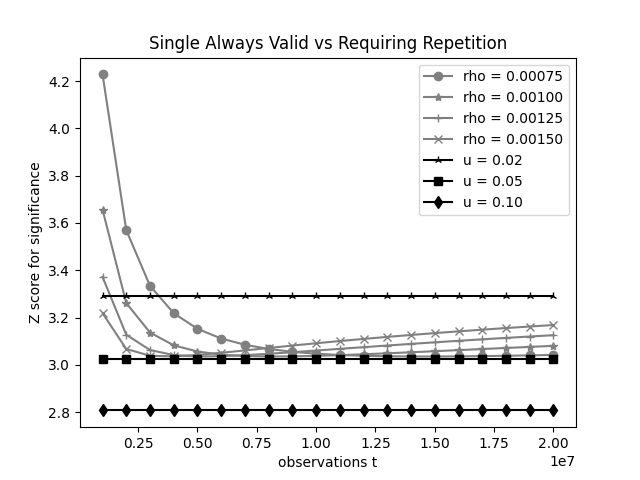} 
\caption{\textbf{$\mathbf{Z}$ score required, continuous monitoring vs. requiring repetition}. Comparison of significance requirements for early stopping for (1) a continuous monitoring method \cite{waudbysmith24} that requires only one significant result and relies on each observation having limited impact on the $p$-value, and (2) requiring repeated significance. For this example: 20 million observations, up to 20 million decision points, a single criterion, and $\alpha = 0.05$, requiring significance for 5\% of the decision points matches the performance of the method that requires only a single significant result.} \label{fig_zrhovu}
\end{figure} 

\section{Discussion} \label{sec_disc}
This paper introduces the tool of requiring repeated significance to allow less-stringent requirements for significance during testing. We showed that doing so enables effective tests with many decision points, even decision points on a per-observation basis, to achieve continuous monitoring.

Several ideas in this paper have their roots in machine learning. Fixing a portion of decision points to set the number of required repetitions is similar to work in machine learning showing that selecting a fixed portion of classifiers for an equally-weighted Gibbs ensemble classifier has similar error bounds to selecting from a set of hypothesis classifiers with size the inverse of the portion, even if the actual hypothesis class is infinite \cite{bax16}. The method in Section \ref{sec_cm} for dealing with a test length that is unknown a priori is conceptually similar to \textit{luckiness frameworks} \cite{shawe-taylor98,shawe-taylor04}, which budget $\alpha$ in ever-thinner amounts over ever-larger hypothesis classes, and thus can supply error bounds for training over unbounded hypothesis classes.

In Section \ref{sec_type2}, we rely on a normal approximation to convert between $p$-values and $Z$ scores. The purpose of using a normal approximation was to analyze how different numbers of observations relate to achieving different levels of significance in a general way. (The results in the other sections do not rely on normal approximations.) That said, bucket testing systems that use large numbers of observations, for example to tune website features, often use normal approximations to produce their $p$-values by first computing $Z$ scores based on empirical means and standard deviations. 

Other methods to compute $p$-values include binomial tail inversion \cite{hoel54,langford05} and hypergeometric tail inversion \cite{chvatal79,skala13,bax_donald19} for rate-based criteria and using standard statistical software to compute cdf values for metrics with known distributions. In many testing systems, $p$-values are approximate, often using a normal approximation that treats empirical variance as the variance of the distribution. For non-approximate testing, consider using concentration inequalities: Bennet \cite{bennett62}, Hoeffding \cite{hoeffding63} or McDiarmid \cite{mcdiarmid89} bounds, or empirical Bernstein bounds \cite{maurer09}, solving for the probability of bound failure with bound range the boundary for a criterion to produce the equivalent of a $p$-value. It is worthwhile to understand how your testing system computes $p$-values, to evaluate how to use them effectively. 

Similarly, many systems assume that test observations are drawn i.i.d. from a (possibly unknown) underlying distribution. It is advisable to examine that assumption and adjust testing accordingly. For example, for many websites users behave differently on weekends than on weekdays, so a suitable test plan may wait for a week, or at least one weekend day and one weekday, before the first decision point. And it may be worthwhile to err on the side of conservatism with confidence, knowing that the i.i.d. assumption underlying $p$-value computations does not quite hold, because of time-to-effect \cite{pocock05}, time-series, and periodic effects . Also, it is useful to realize that even with bucket testing, there is a selection effect that, on average, makes out-of-sample results inferior to test results \cite{bax_compare_prices,pocock05,wang16,huang17}. 

Some continuous monitoring methods \cite{waudbysmith24} rely on individual observations making small changes to metrics and those changes tending to shrink as a test progresses. For example, for a metric that averages over 0-1 (Bernoulli) variables, the running average for observation $t$ is $\frac{t - 1}{t}$ times the previous running average, plus $\frac{1}{t}$ if the observation value is one, so later observations affect the running average less than earlier observations. Such properties imply that metric values become highly correlated from one observation to the next as the test progresses. 

The methods in this paper do not rely on such an assumption; instead, by requiring repetition, they test to what extent such an assumption holds in practice. That enables us to apply the methods in this paper with fewer assumptions about criteria metrics, and without needing to use part of the sample or the $\alpha$ budget to verify those assumptions. In the future, it would be interesting to try to blend the two approaches in hopes of allowing earlier stopping with confidence for the most common types of metrics -- averages over i.i.d. observations that occur over the course of the test.

To be clear, though, many metrics do not fit the model of a simple average over i.i.d. observations that continue to accrue during the test. For example, an average over website users of per-user averages does not fit that model precisely. Additional observations for regular users make smaller contributions to such a metric than initial observations for occasional users late in a test. And the total number of users that is the denominator for the metric may increase quickly at the start of the test then slowly or not at all later in the test. The approach in this paper, by not assuming an underlying model for how observations contribute to metrics over time, allows application without needing to evaluate whether such a model holds, though it does require that the $p$-values apply to the metric of interest. 

In the future, it would also be interesting to investigate ways to defer decisions about how to spend some of $\alpha$ until a portion of the test is completed and still maintain test integrity. We learn from testing, so it would be useful to apply what we learn within the test, to the extent we can do so and still maintain type I error guarantees.


\bibliographystyle{unsrt}
\bibliography{bax}

\end{document}